\def\jmlrheading {}
\newcommand{\bigCI}{\mathrel{\text{\scalebox{1.07}{$\perp\mkern-10mu\perp$}}}}
\begin{document}

\title{Mutual Conditional Independence and Its Applications to Inference in Markov Networks}
%Mutual Conditional Independence Among the Elements of an Independent Set and Its Applications to the Inference in Markov Networks}

\author{\name Niharika Gauraha  \email niharika.gauraha@gmail.com \\
       \addr Systems Science and Informatics Unit\\
       Indian Statistical Institute\\
       8th Mile, Mysore Road Bangalore, India
       }

\editor{}

\maketitle

\begin{abstract}%   <- trailing '%' for backward compatibility of .sty file
The fundamental concepts underlying in Markov networks are the conditional independence and the set of rules called Markov properties that translates conditional independence constraints into graphs. In this article we introduce the concept of mutual conditional independence relationship among elements of an independent set of a Markov network. We first prove that the mutual conditional independence property holds within the elements of a maximal independent set afterwards we prove equivalence between the set of  mutual conditional independence relations encoded by %rules implied by 
all the maximal independent sets and the three Markov properties(pair-wise, local and the global) under certain regularity conditions. The proof employs diverse methods involving graphoid axioms, factorization of the joint probability density functions and the graph theory. We present inference methods for decomposable and non-decomposable graphical models exploiting newly revealed mutual conditional independence property.
\end{abstract}

\begin{keywords}
  Markov Networks, Mutual Conditional Independence, Graphical Models
\end{keywords}

\section{Introduction}
A Markov network is a way of specifying conditional independence constraints between components of a multivariate distribution. Markov properties are the set of rules that determine how conditional independence constraints is translated into a graph. For details on Markov networks we refer the reader to \cite{LaurtBook} and \citet{Jordan}. The three Markov properties usually considered for Markov networks are pairwise, local and the global Markov properties. These Markov properties are equivalent to one another for positive distributions, for details on equivalence of Markov properties see \citet{Matus}. 

In an undirected graph, an independent set consists of mutually non-adjacent vertices or equivalently the elements of an independent set are mutually separated by the rest. For example let $G= \{ V,E \}$ be an undirected graph and let $I = \{V_1,V_2 , ..., V_k \}$ be an independent set of G then the vertices $\{V_1,V_2 , ..., V_k \}$ are mutually separated by $ \{ V \setminus I \}$. 

We extend the notion of similarity between separation in graph and conditional independence in probability to similarity between the mutual separation in graph and the mutual conditional independence in probability. The proof involves various methods from different disciplines; graphoid axioms, probability theory and the graph theory.  

Using graph theoretic concepts, we first prove that All the Maximal Independent Sets(AMIS) uniquely determine the graph. There is one-to-one relationship between graphs and AMIS. Then by applying probability theory concepts we prove that Mutual Conditional Independence Property(MCIP) holds among the elements of a maximal independent set. Since for any Markov network there will be a unique set of AMIS and hence a unique set of mutual conditional independence relations. Considering all mutual conditional independence relations obtained by AMIS, we derive an alternative formulation for the three Markov properties. Then we prove equivalence between the mutual conditional independence property and Markov properties, under positive distribution assumption.

Finally we shift our focus to the problem of probabilistic inference in Markov networks. In a multivariate set up, inference is the problem of computing a conditional probability distribution for the set of components where the values for some of the components are given, for details on inference on graphical models we refer to \citet{Wainwright}. 

We introduce inference methods that take the MCIP of the model into account. It provides quick answers to the queries by filtering on specific criteria. For example let G be any Markov network graph, let U represent the set of unobserved components and O represent the set of observed components or evidence and we wish to compute conditional probability of U given O. 
%\[
%	P(U | O) = \frac{P(U, O)}{P(O)}
%\]
%We introduce an inference algorithm to determine the association between the components of a Markov network.
%analyze interaction between components that takes the MCIP of the model into account. It provides quick answers to the queries by filtering on specific criteria. 
%Before applying any conventional inference algorithm, 
%Let us Suppose that U is a subset of any independent set say M and corresponding separator set $V \setminus M $ is a subset of E, Where V is a set of vertices. If above criteria holds then it is straight forward to conclude that given the set of evidence O the elements of U are mutually conditionally independent. We note that the time complexity for checking whether a set of vertices form independent set is linear in terms of number of vertices in the graph. 
The simplest possible inference is that suppose U forms an independent set and corresponding separator set $V \setminus U $ is same as set of evidence O, where V is a set of vertices then it is straight forward to conclude that given the set of evidence O the elements of U are mutually conditionally independent. We note that the time complexity for checking whether a set of vertices form an independent set is linear in terms of number of vertices in the graph. 
%Let us Suppose that U forms an independent and corresponding separator set $V \setminus U $ is same as set of evidence O, Where V is a set of vertices. If above criteria holds then it is straight forward to conclude that given the set of evidence O the elements of U are mutually conditionally independent. We note that the time complexity for checking whether a set of vertices form an independent set is linear in terms of number of vertices in the graph. 

Our approach of the inference based on MCIP will be very useful for  non-decomposable models where a closed form solution does not exist and for applications where it is more desirable to examine relationship among components than computing probabilities such as %graphical log-linear models 
analysis of categorical data and gene expression arrays. 
%We illustrate the applications of MCIP for inference in Gaussian Graphical Models also on simulated data.

The discussion below is organized as follows. In Section 2 we start with brief overview and mathematical foundations of the theory of Markov networks. Section 3 is concerned with proving that MCIP holds within the elements of an independent set. Section 4 involves deriving the global Markov property using the MCIP and proving equivalence between them. In section 5 we discuss some applications of mutual conditional independence relations in terms of statistical inferences for decomposable and non-decomposable graphical models. In section 6 we give computational details that we used for statistical inferences. In Section 7 we conclude and discuss future scope and applicability of MCIP. 

\section{Overview and Mathematical Foundations}
This section gives a general overview and mathematical foundations of Markov networks.
 
A graphical model is a technique for representation of the conditional independencies between variables in a multivariate probability distribution. The nodes or vertices in the graph correspond to random variables. The absence of an edge between two random variables denotes a conditional independence relation between them. In the literature several classes of graphs with various conditional independence interpretations have been described. Undirected graphical models (Markov Network) and directed acyclic graphs based graphical models(Bayesian Networks) are the most commonly known graphical models. In this article we only consider undirected graphical models, also known as Markov random fields or Markov networks. For details on the foundation of the theory of Markov networks we refer to \citet{LaurtBook} , \citet{Whittaker}, \citet{Preston} and\citet{Spitzer}.

\subsection{Graph Theory}
This section provides necessary concepts and definition of the graph theory that we will be using in later sections. For details on graph theory for graphical models we refer to \citet{LaurtBook}.
\subsubsection*{Notations}
A graph G, is a pair G = (V, E), where V is a set of vertices and E is a set of edges. 
\begin{definition}[Undirected Graphs]
A graph is  said to be an undirected graph if its vertices are connected by undirected edges. We consider only simple graph that has neither self loops nor multiple edges. \end{definition}
\begin{definition}[Maximal Independent set]
An independent set of a graph G is a subset S of nodes such that no two nodes in S are adjacent. An independent set is said to be maximal if no node can be added to S without violating independent set property.\end{definition}

\begin{theorem}[Uniqueness of Maximal Independent Sets]
	Given the complete list of maximal independent sets of a graph, the graph is uniquely determined.
\end{theorem}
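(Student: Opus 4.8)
The plan is to show that both the vertex set $V$ and the edge set $E$ can be reconstructed purely from the collection $\mathcal{M}$ of all maximal independent sets of $G$; this establishes that the map sending a graph to its family of maximal independent sets is injective, which is exactly the claimed uniqueness. The argument hinges on a single structural fact, stated as a lemma below, from which the reconstruction of vertices and edges follows by direct translation of the definition of independence.

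First I would recover the vertices. Every single vertex $v$ forms an independent set $\{v\}$, and in a finite graph any independent set can be enlarged one vertex at a time until no further vertex may be added, yielding a maximal independent set that contains $v$. Hence every vertex lies in at least one member of $\mathcal{M}$, so $V = \bigcup_{S \in \mathcal{M}} S$, which recovers the vertex set from $\mathcal{M}$ alone. Next I would recover the edges through the key equivalence: for distinct $u, v \in V$, the pair $\{u,v\}$ is a \emph{non}-edge of $G$ if and only if some $S \in \mathcal{M}$ contains both $u$ and $v$. The reverse direction is immediate, since any two vertices lying in a common independent set are by definition non-adjacent. For the forward direction, if $u$ and $v$ are non-adjacent then $\{u,v\}$ is itself independent, and by the same extension argument it is contained in some maximal independent set. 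Consequently $\{u,v\} \in E$ precisely when no member of $\mathcal{M}$ contains both endpoints, so $E$ too is determined by $\mathcal{M}$.

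With both $V$ and $E$ expressed entirely in terms of $\mathcal{M}$, the graph $G$ is uniquely determined, which completes the proof. The only step that requires genuine care is the extension claim, that every independent set is contained in a maximal one; this is where finiteness of the graph (or, in the general case, Zorn's lemma applied to the partial order of independent sets under inclusion) is invoked, and I expect it to be the main obstacle in the sense that it is the one non-formal ingredient. I would therefore isolate this extension fact as the central lemma and present the reconstructions of $V$ and of $E$ as straightforward consequences, emphasizing that the edge recovery is really just the contrapositive reading of ``non-adjacent $\iff$ jointly extendable to a maximal independent set.''
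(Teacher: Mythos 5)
Your proposal is correct and follows essentially the same route as the paper's proof: recover $V$ as the union of the maximal independent sets (since every singleton extends to a maximal independent set) and recover $E$ as exactly those pairs contained in no maximal independent set (since every independent pair likewise extends). The only difference is presentational --- you isolate the extension-to-maximal lemma explicitly, which the paper invokes implicitly --- so the two arguments are the same in substance.
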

\begin{proof}
	 Let $C_1, C_2,...C_m$ be the complete list of the maximal independent sets of nodes of a graph, then we have to show that the node set V is the union $  V= C_1 \cup C_2 \ \cdots \cup C_m $ and the edge set E consists of all unordered pairs $ \{ x, y \}$ of distinct elements of V such that $ \{ x, y \}$ is not contained in any of the $C_i$.

Clearly V contains the above union. Conversely, if $x \in V$ then $ \{ x \} $ is an independent set and hence is contained in some maximal independent set $C_i$ and then $x \in C_i$ and hence $x$ belongs to the union. This determines V. 

Also it is clear that any edge $ \{ x, y \}$  (where x, y are distinct elements of V) is not contained in any $C_i$. Conversely, if $ \{ x, y \}$ is a pair of nodes which is not an edge, then $ \{ x, y \} $ is an independent set and hence is contained in some $C_i$. Thus edge set E is also uniquely determined, as stated.
\end{proof}
\subsection{Conditional Independence}
In this section we define conditional independence in probability and Markov properties for Markov networks.%Graphical models are based on conditional independence restrictions for a set of variables with respect to all other variables. 
\begin{definition}[Conditional Independence] If $X, Y, Z $ are random variables with joint distribution P. Random variables X and Y are said to be conditionally independent given the random variable Z if following holds.%if the conditional distribution of X and Y given Z is the same as the product of conditional distribution of X given Z alone \citet{Dawid}, \citet{Whittaker} and \citet{Matus}.
\begin{align*}
	X  \bigCI Y \mid Z  &\iff P(X,Y \mid Z) = P(X \mid Z)P(Y \mid Z) \\
	& \iff p(X \mid Y,Z) = p(X \mid Z)\\	 
\end{align*}
\end{definition}

\subsection{Properties of Conditional Independence(graphoid axioms)}
We define some properties of conditional independence in terms of graphoid axioms as follows.
%, which we will make use of in proving Markov properties in later section.
%give reference Variants of them were first introduced by Dawid (1979) and further studied by Spohn (1980), Pearl and Paz (1985), Pearl (1988) and Geiger (1990). 
\begin{align}
	  &\textbf{Symmetry: } X \bigCI Y \mid Z \implies  Y \bigCI X \mid Z\\
	  \nonumber \\
	  &\textbf{Decomposition: } X \bigCI (Y \cup W) \mid Z \implies  X \bigCI Y \mid Z\\
	  \nonumber \\
	  &\textbf{Weak Union: } X \bigCI (Y \cup W) \mid Z \implies  X \bigCI Y \mid (Z\cup W)\\
	  \nonumber \\	  
	  &\textbf{Contraction: } X \bigCI (Y ) \mid Z \text{ and } X \bigCI (W) \mid (Z \cup Y) \implies  X \bigCI (Y \cup W) \mid (Z)\\	
	  \nonumber \\	  
	  &\textbf{Intersection: } X \bigCI (Y ) \mid (Z \cup W) \text{ and } X \bigCI (W) \mid (Z \cup Y) \implies  X \bigCI (Y \cup W) \mid (Z)	  
\end{align}
An alternative set of complete axioms we refer to \citet{Dan}.

\begin{definition}[semi-graphoid and graphoid]
A semi-graphoid is a dependency model which satisfies $Eq(1) - Eq(4)$. If also Eq$(5)$ holds it is called a graphoid.\end{definition}
\begin{definition}[ Probabilistic graphoid ]
In probability, Conditional independence defined as
\begin{align*}
	P(X, Y\mid Z) = P(X\mid Z)
\end{align*}\end{definition}
is a semi-graphoid. when P is strictly positive conditional independence becomes a graphoid.
\begin{definition}[ Graph Separation as graphoid]
Graph separation in undirected graph satisfies graphoid axioms. For details we refer to \citet{LaurtBook} and \citet{Dawid}.\end{definition}

\subsection{Markov Properties of Undirected Graphs}
In this sections we define the following three Markov properties for Markov networks. Let  $G= \{ V,E \}$ be an undirected graph and $P$ be a probability distribution over G.
\begin{definition}[(P) Pairwise Markov Property] 
The probability distribution P satisfies the pairwise Markov property for the graph G if for every pair of non adjacent edges X and Y , X is independent of Y given everything else. 
\begin{align*}
	 X \bigCI Y \mid (V \setminus {X,Y}) 
\end{align*}
\end{definition}
\begin{definition}[(L) Local Markov Property]
The probability distribution P satisfies the local Markov property for the graph G if every variable X is conditionally independent of its non-neighbours in the graph, given its neighbours. 
\begin{align*}
	X \bigCI (V \setminus {X \cup bd(X)}) \mid bd(X)
\end{align*}
where bd(X) denotes boundary of X.
\end{definition}
\begin{definition}[(G) Global Markov Property]
The probability distribution P, is said to be global Markov with respect to an undirected graph G if, for any disjoint subsets of nodes A, B, C such that C separates A and B on the graph, if and only if the distribution satisfies
\begin{align*}
	A \bigCI B \;| \;C
\end{align*}
\end{definition}

\begin{proposition}
Let G be an undirected graph. A probabilistic independence model that satisfies semi-graphoid axioms with respect to G, the following holds. For proof we refer \citet{LaurtBook}.
\begin{align*}
	(G) \implies (L) \implies (P)
\end{align*}\end{proposition}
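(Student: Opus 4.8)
The plan is to prove the two implications separately, handling $(G) \implies (L)$ by a direct appeal to the separation definition and $(L) \implies (P)$ by invoking the semi-graphoid axioms, principally weak union (Eq(3)). Throughout, the only facts I need are the definitions of the three Markov properties and the elementary combinatorics of boundaries in $G$.

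First I would establish $(G) \implies (L)$. Fix an arbitrary vertex $X$ and set $A = \{X\}$, $C = bd(X)$, and $B = V \setminus (\{X\} \cup bd(X))$, the collection of non-neighbours of $X$. The key observation is purely graph-theoretic: every path from $X$ to a vertex of $B$ must leave $X$ through one of its neighbours, so $bd(X)$ separates $A$ from $B$ in $G$. Applying the global Markov property $(G)$ to this triple then yields $X \bigCI (V \setminus \{X \cup bd(X)\}) \mid bd(X)$, which is exactly the statement of $(L)$.

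Next I would derive $(L) \implies (P)$. Fix a non-adjacent pair $X, Y$. Since $Y$ is not a neighbour of $X$, it lies in the non-neighbour set $N = V \setminus (\{X\} \cup bd(X))$, and $(L)$ gives $X \bigCI N \mid bd(X)$. Writing $N = \{Y\} \cup (N \setminus \{Y\})$ and applying weak union produces $X \bigCI Y \mid (bd(X) \cup (N \setminus \{Y\}))$. It then remains to check that the conditioning set is the whole of $V \setminus \{X, Y\}$: since $bd(X) \cup N = V \setminus \{X\}$, deleting $Y$ gives $bd(X) \cup (N \setminus \{Y\}) = V \setminus \{X, Y\}$, whence $X \bigCI Y \mid (V \setminus \{X, Y\})$, which is $(P)$.

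I expect the main obstacle to be bookkeeping rather than conceptual depth: the first implication rests entirely on the fact that the boundary separates a vertex from its non-neighbours, and the second is a routine application of a single axiom. The one point requiring care is the set-theoretic identity $bd(X) \cup (N \setminus \{Y\}) = V \setminus \{X, Y\}$, which must be verified so that the weak-union step lands precisely on the pairwise conditioning set. Note that neither implication uses the intersection axiom (Eq(5)), so the semi-graphoid hypothesis suffices; this is consistent with the well-known fact that positivity of $P$ is needed only for the reverse chain $(P) \implies (G)$, not for the direction proved here.
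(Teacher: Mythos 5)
Your proof is correct and coincides with the standard argument that the paper itself defers to (the paper gives no proof of its own, citing Lauritzen instead): boundary separation of a vertex from its non-neighbours plus an instantiation of $(G)$ gives $(L)$, and a single application of weak union with the identity $bd(X) \cup (N \setminus \{Y\}) = V \setminus \{X,Y\}$ gives $(P)$. Your closing observation that the intersection axiom is never used, so the semi-graphoid hypothesis suffices for this direction, is also accurate.
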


\begin{proposition}
Let G be an undirected graph. A probabilistic independence model that satisfies graphoid axioms with respect to G, the following holds. For proof we refer \citet{Judea} and \citet{Dawid}.
\begin{align*}
	(G) \iff (L) \iff (P)
\end{align*}\end{proposition}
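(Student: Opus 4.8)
The plan is to treat the two implications separately and to observe that the chain $(G)\implies(L)\implies(P)$ is already supplied by the preceding proposition, which requires only the semi-graphoid axioms $Eq(1)$--$Eq(4)$. Hence everything reduces to closing the loop by proving $(P)\implies(G)$, and this is the only place where the full graphoid structure---in particular the \textbf{Intersection} axiom $Eq(5)$, valid here precisely because the independence model is a graphoid---will be needed.

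To prove $(P)\implies(G)$ I would fix disjoint sets $A,B,C$ with $C$ separating $A$ from $B$ in $G$ and establish $A\bigCI B\mid C$ by \emph{downward} induction on the cardinality of the separator $C$. The base case is $|C|=|V|-2$: then $A=\{a\}$ and $B=\{b\}$ are singletons, $C=V\setminus\{a,b\}$, and separation of $a$ and $b$ by $C$ forces $\{a,b\}\notin E$, so the pairwise property $(P)$ gives $a\bigCI b\mid V\setminus\{a,b\}=C$ immediately.

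For the inductive step, assuming the claim for every separator of cardinality strictly larger than $|C|$, I would split into two cases. If there is a vertex $v\in V\setminus(A\cup B\cup C)$, then $C\cup\{v\}$ still separates $A$ from $B$, so the inductive hypothesis yields $A\bigCI B\mid C\cup\{v\}$; moreover $v$ lies in a connected component of the subgraph induced by $V\setminus C$ that meets at most one of $A,B$, so (after interchanging $A$ and $B$ if necessary) $C\cup A$ separates $v$ from $B$ and the inductive hypothesis gives $B\bigCI\{v\}\mid C\cup A$. Applying \textbf{Intersection} with the identification $X=B$, $Y=A$, $W=\{v\}$, $Z=C$ produces $B\bigCI(A\cup\{v\})\mid C$, whence \textbf{Decomposition} gives $A\bigCI B\mid C$. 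If instead $A\cup B\cup C=V$, then since $|C|<|V|-2$ at least one side, say $A$, has two or more vertices; choosing $a\in A$ and writing $A'=A\setminus\{a\}$, separation shows that $C\cup\{a\}$ separates $A'$ from $B$ while $C\cup A'$ separates $a$ from $B$, so the inductive hypothesis yields $A'\bigCI B\mid C\cup\{a\}$ and $\{a\}\bigCI B\mid C\cup A'$, and \textbf{Intersection} again assembles these into $A\bigCI B\mid C$.

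The main obstacle I anticipate is not the algebra of the graphoid axioms but the bookkeeping that guarantees each enlarged conditioning set remains a valid separator of the smaller pieces; this rests on the elementary graph facts that if $C$ separates $A$ from $B$ then it separates every subset of $A$ from $B$, and that adjoining separated vertices to the separator preserves separation. The essential analytic content is concentrated in the single use of $Eq(5)$, which is exactly why the equivalence upgrades from the one-directional chain of the semi-graphoid case to the full cycle $(G)\iff(L)\iff(P)$ under the positivity (graphoid) assumption; the detailed write-up can follow \citet{Judea} and \citet{Dawid}.
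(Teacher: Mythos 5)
Your proposal is correct, and it is essentially the paper's approach: the paper provides no in-text proof of this proposition but defers to \citet{Judea} and \citet{Dawid}, and your argument---reducing everything to $(P)\implies(G)$ via the preceding semi-graphoid proposition, then running the backward induction on $|C|$ in which the intersection axiom $Eq(5)$ merges the two enlarged-separator independences---is precisely the classical Pearl--Paz argument contained in those references. The graph-theoretic bookkeeping you flag (supersets of separators remain separators, and no component of the graph induced on $V\setminus C$ meets both $A$ and $B$) is handled correctly in both of your cases, so the induction is sound.
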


\subsection{Markov Network Graphs and Markov Network}
After discussing the graph theory, conditional independence and Markov properties for undirected graphs, now we are ready to define Markov network graphs and Markov networks.

\begin{definition}[Markov Network Graph]
A Markov network graph is an undirected graph G = ( V, E ) where $V= \{ X_1, X_2,..,X_n \}$ corresponds to random variables of a multivariate distribution. 
\end{definition}

\begin{definition}[Markov Network]
A Markov network is a tuple $M = (G, \psi)$ where G is a Markov network graph, $\psi = \{ \psi_1, \psi_2, ..., \psi_m\}$ is a set of non negative functions for each clique	$C_i \in G $ $\forall i = 1 \dots m $ and the joint pdf can be decomposed into factors as
\begin{align*}
	P(x) =\frac{1}{Z} \prod_{a \in C_m} \psi_a(x)
\end{align*}
where Z is a normalizing constant.
\end{definition}

\begin{theorem}[Hammersley-Clifford theorem]
Let  $M = (G, \psi)$ be a Markov network. Let probability density function $p(X)$ of the distribution of $X = \{ X_1, X_2,..,X_n \}$ be strictly positive. X satisfies global Markov property with respect to graph G if and only if it factorizes as follows.
\[
	P(x) = \prod_{a \in C_m} \psi_a(x)
\]
where $C_m$ are the maximal cliques of G and $\psi_a(x)$ depends on x through $ x_a = (x_v)_{v \in a}$ only.
\end{theorem}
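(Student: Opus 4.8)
The plan is to prove the two implications separately, since the positivity hypothesis is essential for only one of them. The direction ``factorization $\implies$ global Markov'' holds for any distribution, whereas the converse genuinely requires $p(x) > 0$ so that $\log p$ is everywhere defined.

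First I would handle the easy direction. Assume $p(x) = \prod_{a \in \mathcal{C}} \psi_a(x)$, where $\mathcal{C}$ is the set of maximal cliques and each $\psi_a$ depends on $x$ only through $x_a$. Let $A, B, C$ be disjoint with $C$ separating $A$ from $B$. Remove $C$ from $G$ and let $A^*$ be the union of all connected components of $G \setminus C$ meeting $A$, and $B^* = V \setminus (A^* \cup C)$; then $B \subseteq B^*$ and there is no edge between $A^*$ and $B^*$. Consequently every maximal clique lies entirely within $A^* \cup C$ or within $B^* \cup C$, so the product splits as $p(x) = f(x_{A^* \cup C})\, g(x_{B^* \cup C})$. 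This factorization immediately yields $A^* \bigCI B^* \mid C$, and the Decomposition axiom then gives $A \bigCI B \mid C$, i.e. the global Markov property.

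For the converse I would use M\"obius inversion (inclusion--exclusion). Fix a reference configuration $x^*$ and, for each $a \subseteq V$, define
\[
\phi_a(x) = \sum_{b \subseteq a} (-1)^{\lvert a \rvert - \lvert b \rvert} \log p\bigl(x_b, x^*_{V \setminus b}\bigr),
\]
where $(x_b, x^*_{V \setminus b})$ agrees with $x$ on $b$ and with $x^*$ off $b$; positivity makes the logarithm well defined. The M\"obius inversion formula then gives $\log p(x) = \sum_{a \subseteq V} \phi_a(x)$, and each $\phi_a$ depends on $x$ only through $x_a$.

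The crux, and the step I expect to be the main obstacle, is to show that $\phi_a \equiv 0$ whenever $a$ is not a clique. Suppose $a$ contains two non-adjacent vertices $u$ and $v$. Grouping the defining alternating sum according to which of $u, v$ a subset $b$ contains organizes the terms into quadruples indexed by the subsets $b \subseteq a \setminus \{u,v\}$; within each quadruple I would invoke the pairwise Markov property $X_u \bigCI X_v \mid (V \setminus \{u,v\})$ --- available because (P), (L) and (G) coincide for positive distributions by the graphoid proposition --- to show that the four log-density terms sum to zero, forcing $\phi_a = 0$. Making this cancellation airtight, while correctly tracking the reference coordinates $x^*$, is the delicate part. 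Once it is done, only clique terms survive, so $\log p(x) = \sum_{a \text{ clique}} \phi_a(x)$; assigning each clique term to a maximal clique containing it and exponentiating produces $p(x) = \prod_{a \in \mathcal{C}} \psi_a(x)$ with $\psi_a$ a function of $x_a$ alone, as required.
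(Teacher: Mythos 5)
The paper contains no proof of this statement: it is quoted as the classical Hammersley--Clifford theorem, with foundations deferred to \citet{LaurtBook}, and is used downstream (Theorem 16 and Section 4) as a black box. So the only meaningful comparison is with the standard literature proof --- which is precisely what you have reconstructed. Your easy direction is the textbook argument: enlarge $A$ to the union $A^*$ of components of $G \setminus C$ meeting $A$, observe that no clique can meet both $A^*$ and $B^* = V \setminus (A^* \cup C)$ because there are no edges between them, split the clique product into $f(x_{A^* \cup C})\, g(x_{B^* \cup C})$, and finish with Decomposition. Your converse is the Besag--Grimmett M\"obius-inversion argument, and the outline has all the essential ingredients: positivity to define $\log p$, potentials $\phi_a$ relative to a reference configuration $x^*$, vanishing of $\phi_a$ on non-complete sets, and regrouping of clique potentials into maximal cliques. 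One small simplification: you do not need the positive-distribution equivalence of the Markov properties to obtain the pairwise relation $X_u \bigCI X_v \mid X_{V \setminus \{u,v\}}$; it follows from (G) directly, since $V \setminus \{u,v\}$ separates non-adjacent $u$ and $v$ (the semi-graphoid implication $(G) \implies (P)$ suffices).

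The step you flag as delicate --- the four-term cancellation --- is indeed the crux, but it is short and your plan for it is the right one. For $b \subseteq a \setminus \{u,v\}$, writing $y_c = (x_c, x^*_{V \setminus c})$, the quadruple indexed by $b$ contributes
\[
(-1)^{\lvert a \rvert - \lvert b \rvert} \log \frac{p(y_{b \cup \{u,v\}})\, p(y_b)}{p(y_{b \cup \{u\}})\, p(y_{b \cup \{v\}})}.
\]
The configurations $y_{b \cup \{u,v\}}$ and $y_{b \cup \{v\}}$ differ only in the $u$-coordinate, as do $y_{b \cup \{u\}}$ and $y_b$; hence each of the two ratios $p(y_{b \cup \{u,v\}})/p(y_{b \cup \{v\}})$ and $p(y_{b \cup \{u\}})/p(y_b)$ is a ratio of conditional densities of $X_u$ (evaluated at $x_u$ versus $x^*_u$) given the remaining coordinates. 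The two conditioning configurations agree everywhere except at $v$, so $X_u \bigCI X_v \mid X_{V \setminus \{u,v\}}$ forces the two ratios to be equal, making the logarithm zero and hence $\phi_a \equiv 0$. With this paragraph inserted, your proposal is a complete and correct proof of the theorem; it is the standard one, just not one the paper itself spells out.
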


It follows from the above discussion that if a strictly positive probability distribution factorizes with respect to G then it also satisfies all Markov properties(pair-wise,local and global) w.r.t. G.

\section{Mutual Conditional Independence }
In this section we prove that the elements of an independent set are mutually conditionally independent given the rest.
\begin{theorem}[Mutual Conditional Independence in Markov networks]
Let G be a Markov network graph and P(X) is a strictly positive probability which supports the conditional independences relations required to satisfy pairwise, local, and the global Markov property for G, then elements of an independent set I of G are mutually conditionally independent given the rest $\{ V \setminus I \}$.
\end{theorem}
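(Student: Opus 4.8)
The plan is to reduce the statement to a factorization of the conditional density and to obtain that factorization from the Hammersley--Clifford theorem together with the combinatorial defining property of an independent set. Write $I=\{V_1,\dots,V_k\}$ and $S=V\setminus I$; here ``mutually conditionally independent given the rest'' means
\[
P(x_I\mid x_S)=\prod_{i=1}^{k}P(x_{V_i}\mid x_S).
\]
Since $P$ is strictly positive and satisfies the global Markov property, Hammersley--Clifford gives $P(x)=\prod_{a}\psi_a(x_a)$ over the maximal cliques. The key observation is graph-theoretic: as $I$ is an independent set, no edge joins two of its vertices, so every clique $a$ meets $I$ in at most one vertex. Hence the potentials split into a factor $g_0(x_S)=\prod_{a\cap I=\emptyset}\psi_a$ depending only on $x_S$ and, for each $i$, a factor $g_i(x_{V_i},x_S)=\prod_{a\ni V_i}\psi_a$ depending only on $V_i$ and part of $S$, giving $P(x)=g_0(x_S)\prod_{i}g_i(x_{V_i},x_S)$.

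Next I would read off the relevant marginals. Summing over $x_I$ factorizes precisely because the $g_i$ involve disjoint $I$-variables, so $P(x_S)=g_0(x_S)\prod_i h_i(x_S)$ with $h_i(x_S)=\sum_{x_{V_i}}g_i(x_{V_i},x_S)$; summing out all but $V_i$ gives $P(x_{V_i},x_S)=g_0(x_S)\,g_i(x_{V_i},x_S)\prod_{j\ne i}h_j(x_S)$. Dividing by $P(x_S)$ yields $P(x_{V_i}\mid x_S)=g_i/h_i$, while dividing the full factorization by $P(x_S)$ yields $P(x_I\mid x_S)=\prod_i g_i/h_i$; comparing the two establishes the claimed product. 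For continuous laws the sums become integrals, with no change in structure.

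A more economical route uses only graph separation and the graphoid axioms, and I would present it as an alternative. First note the graph fact that $S$ separates $\{V_i\}$ from $I\setminus\{V_i\}$: every neighbour of $V_i$ lies in $S$, so any path leaving $V_i$ immediately enters $S$. The global Markov property then gives $V_i\bigCI(I\setminus\{V_i\})\mid S$ for each $i$, and the decomposition axiom weakens this to $V_i\bigCI(V_1,\dots,V_{i-1})\mid S$. Substituting these into the chain-rule expansion of $P(x_I\mid x_S)$ again produces the desired product.

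The point to flag is that one cannot argue directly from the pairwise property: pairwise conditional independence of the $V_i$ does \emph{not} imply their mutual conditional independence. The substantive step is the strengthening to ``each $V_i$ is independent of all the others jointly, given $S$,'' and this is exactly where the independent-set hypothesis enters---either as ``at most one element of $I$ per clique'' in the factorization argument, or as ``$S$ separates $V_i$ from the remainder of $I$'' in the separation argument. Everything else is routine manipulation of marginals or of the graphoid axioms.
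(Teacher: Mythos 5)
Your proposal is correct, and your primary route is the same in spirit as the paper's: invoke Hammersley--Clifford and exploit the fact that an independent set meets every clique in at most one vertex, so the clique potentials sort themselves into per-vertex factors. But your execution is genuinely more careful in two places where the paper's proof has real gaps. First, the paper asserts ``without loss of generality we can assume that there are exactly $k$ maximal cliques'' $\{X_i, Y_i\}$, one per element of $I$ --- this is not a WLOG: there may be maximal cliques entirely contained in $V \setminus I$, and a single $X_i$ may lie in several maximal cliques. Your decomposition $P(x)=g_0(x_S)\prod_i g_i(x_{V_i},x_S)$, with $g_0$ collecting the cliques disjoint from $I$ and $g_i$ collecting \emph{all} cliques containing $V_i$, handles the general case. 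Second, the paper writes $P(I \mid V\setminus I)$ as the bare product of potentials with the conditioning values plugged in, silently dropping the normalization by $P(V \setminus I)$; your marginal computations ($h_i(x_S)=\sum_{x_{V_i}}g_i$, then $P(x_{V_i}\mid x_S)=g_i/h_i$) supply exactly the missing step, and show why the normalization factorizes compatibly. Your alternative argument (every neighbour of $V_i$ lies in $S$, so $S$ separates $V_i$ from $I\setminus\{V_i\}$; apply the global Markov property and the chain rule) is a rigorous rendering of what the paper only gestures at in its first paragraph, where it claims that deleting $V\setminus I$ ``disconnects'' the $X_i$ and that this ``implies in probability complete independence.'' That separation route is also the more economical of the two, since it never touches the factorization; the factorization route, on the other hand, yields the explicit product form of the conditional density, which is what the paper later uses for the closed-form cell-count estimates. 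Your closing caveat --- that pairwise conditional independence of the $V_i$ given $S$ would not suffice, and that the independent-set hypothesis must enter as a joint statement --- is well placed, and is a distinction the paper itself blurs in the subsequent equivalence theorem.
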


\begin{proof}
Let $I = \{ X_1, X_2, ..., X_k \}$ be an independent set of G. Since $\{ X_1, X_2, ..., X_k \}$ are mutually non-adjacent, when we condition on $V \setminus I$ or equivalently when we remove the nodes $V \setminus I$ from G, the remaining vertices $\{ X_1, X_2, ..., X_k \}$ are disconnected which implies in probability complete independence among vertices of I.

Since $I = \{ X_1, X_2, ..., X_k \}$ form independent set they belong to separate cliques say $X_i \in C_i$ , for $ i = 1 \; to \; k$, where $C_i$ is a maximal clique in G. Without loss of generality we can assume that there are exactly k maximal cliques. From Theorem(Hammersley-Clifford theorem) the P factorizes as follows.
\[
	P = \psi_1(X_1, Y_1) \; \psi_2(X_2, Y_2) ... \psi_k(X_k, Y_k)\\	
\]
where $Y_i's$ are the sets of nodes that connects two or more $C_i's$ and each $ \{ X_i, Y_i \}$ forms a maximal clique in G. It can be noted that $Y_i$ can be empty in case of a disconnected graph and also union of $ \cup Y_i = V \setminus I$ .\\
The conditional probability $P(I \mid V \setminus I) $ can be expressed as 
\begin{align*}
	P(I \mid (Y_1 = y_1,..., Y_k =y_k) &= \psi_1(X_1, y_1) \; \psi_2(X_2, y_2) ... \psi_k(V_k, k_k)\\	
	P(I \mid V \setminus I) &= \phi_1(X_1) \; \phi_2(X_2) ... \phi_k(X_k)
\end{align*}
Hence $\{ X_1, X_2, ..., X_k \}$ are mutually conditionally independent given $\{ V \setminus I \}$.
\end{proof}

\section{Mutual Conditional Independence and the Markov Properties}
In this section we represent an alternative way to derive conditional independence relations required for satisfying the Markov properties of Markov networks. Specifically we prove equivalence between MCIP and pairwise Markov property and from proposition(12) it follows for other Markov properties(Local and the global). 
\begin{theorem}[Equivalence of MCIP and Markov properties]
Let G be a Markov network graph and let P be a strictly positive probability distribution which satisfies mutual conditional independence relations implied by maximal independent sets of G. Then conditional independence relations required to satisfy pairwise, local, and the global Markov properties for G also holds in P.
\end{theorem}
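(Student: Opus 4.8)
The plan is to establish the pairwise Markov property (P) directly from the mutual conditional independence hypothesis, and then to invoke the equivalence of the Markov properties — valid here because $P$ is strictly positive and hence a graphoid — to conclude that (L) and (G) hold as well. This runs the argument in the opposite direction to the previous section: instead of deriving mutual independence from the factorization guaranteed by the Markov properties, I would start from the mutual independences encoded by the maximal independent sets and recover the defining separation-based independences of $G$. The text of the section already announces this route, so the task reduces to (P), after which Proposition 12 does the rest.

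First I would fix an arbitrary pair of non-adjacent vertices $X$ and $Y$. By definition $\{X,Y\}$ is then an independent set of $G$, and since every independent set extends to a maximal one (this is implicit in the uniqueness theorem for maximal independent sets), there is a maximal independent set $C_i = \{X,Y,W_1,\ldots,W_r\}$ with $\{X,Y\}\subseteq C_i$. Writing $Z = V\setminus C_i$ for its complement, the hypothesis supplies the mutual conditional independence of the elements of $C_i$ given $Z$; that is, $P(X,Y,W_1,\ldots,W_r\mid Z)$ factorizes as the product of the individual conditional marginals.

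The core of the proof is to pass from this product factorization to a pairwise statement. From the factorization one extracts, by regrouping the factors into the $X$-factor and the product of the remaining factors (and using that a sub-collection of mutually independent variables is again mutually independent after marginalizing),
\[
	X \bigCI (Y \cup W_1 \cup \cdots \cup W_r) \mid Z .
\]
Applying the Weak Union axiom with $W = W_1 \cup \cdots \cup W_r = C_i\setminus\{X,Y\}$ then yields $X \bigCI Y \mid (Z \cup W)$. The point is that the enlarged conditioning set collapses exactly to ``everything else'': since $Z \cup W = (V\setminus C_i)\cup(C_i\setminus\{X,Y\}) = V\setminus\{X,Y\}$, we obtain $X \bigCI Y \mid (V\setminus\{X,Y\})$, which is precisely the pairwise condition for the pair $(X,Y)$. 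As the pair was an arbitrary non-adjacent one, (P) holds for $P$. Finally, because $P$ is strictly positive it is a graphoid, so Proposition 12 gives $(P)\implies(L)\implies(G)$, and all three Markov properties hold.

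The step I expect to be the main obstacle is this reduction, and specifically the need to use the \emph{strong} (product-form) reading of mutual conditional independence rather than mere pairwise conditional independence: it is exactly the product factorization of the whole set that licenses the grouped statement $X \bigCI (C_i\setminus\{X\}) \mid Z$ on which Weak Union acts. A secondary point requiring care is simply the set identity $V\setminus\{X,Y\} = (V\setminus C_i)\cup(C_i\setminus\{X,Y\})$, which is what aligns the conditioning set produced by Weak Union with the one demanded by the pairwise property; any mismatch there would leave a residual set of conditioning variables and break the argument.
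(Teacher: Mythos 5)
Your proposal is correct and follows essentially the same route as the paper: embed the non-adjacent pair $\{X,Y\}$ in a maximal independent set, use the product form of the mutual conditional independence to get $X \bigCI (C_i\setminus\{X\}) \mid V\setminus C_i$, apply Weak Union so the conditioning set becomes $V\setminus\{X,Y\}$, and then invoke Proposition 12 (positivity/graphoid) to lift (P) to (L) and (G). If anything, your writeup is more careful than the paper's prose, which states the weak-union step explicitly only in its worked example rather than in the general argument.
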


\begin{proof}
We prove equivalence of MCIP and pair-wise Markov property. Then under assumption of positive distribution MCIP is equal to local and the global Markov property as well.

Let $C_1, C_2,...C_m$ be the complete list of the maximal independent sets of nodes of G then as stated before $  V=  C_1 \cup C_2 \ \cdots \cup C_m $ and $E =  \{ ( x, y ) : ( x, y ) \notin C_i \; \forall i = 1 \cdots m \}$.
%then the node set V is the union $  V= C_1 \cup C_2 \ \cdots \cup C_m $ and the edge set E consists of all unordered pairs $ \{ x, y \}$ of distinct elements of V such that $ \{ x, y \}$ is not contained in any of the $C_i$. 

If conditional independence relations required to satisfy pairwise Markov property holds in P w.r.t. G, then elements of $C_i$ are mutually conditionally independent conditioned on $\{V \setminus C_i \}$ by Theorem 16. So we have proved as
\[
	\text{pairwise Marko Property } \implies MCIP
\]

We must recall that the Mutual conditional independence implies pair-wise conditional independence. Since elements of a $C_i$ are mutually conditionally independent given the $V \setminus C_i$, therefore they are also pairwise independent given $V \setminus C_i$. 

Now let us Suppose that $ \{ x, y \} \in V$ is a pair of nodes which is not an edge, then $ \{ x, y \} $ is an independent set and hence is contained in some $C_i$ and hence pairwise independent given the rest. Therefore
\[
	\text{pairwise Marko Property } \iff MCIP
\]
\end{proof}

We illustrate the proof by an example as follows. Let us consider the Markov network as given in figure (\ref{figure:1}).

%undirected graph construction
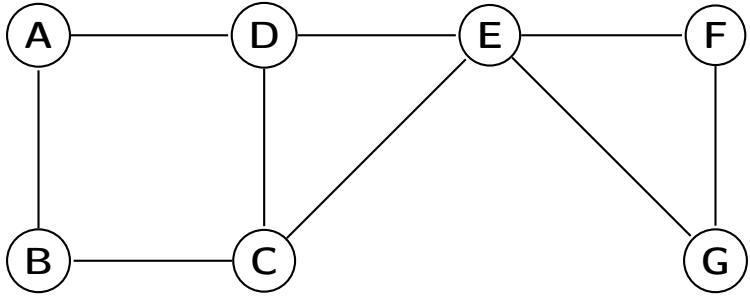
\begin{figure}
\begin{center}
\begin{tikzpicture}[>=stealth',shorten >=1pt,auto,node distance=3cm,
  thick,main node/.style={circle,draw,font=\sffamily\Large\bfseries}]
	
  \node[main node] (A) {A};
  \node[main node] (B) [below  of=A] {B};
  \node[main node] (D) [ right of=A] {D};
  \node[main node] (C) [below  of=D] {C};
  \node[main node] (E) [ right of=D] {E};
  \node[main node] (F) [ right of=E] {F};
  \node[main node] (G) [below of=F] {G};
    
  \path[every node/.style={font=\sffamily\small}]
   (A) edge node [left] {} (D)
      
   (B) edge node [right] {} (A)
   %edge node {} (D)
   (C) edge node [right] {} (B)      
   (D) edge node [left] {} (C)
   (D) edge node [right] {} (E)
	(C) edge node [right] {} (E)
	(E) edge node [right] {} (F)
	(E) edge node [right] {} (G)   
	(F) edge node [] {} (G);        
\end{tikzpicture}
\\
	\caption{A Markov network graph} \label{figure:1} 
\end{center}
\end{figure}
All Maximal independent sets for graph G are as :\\
\[
S = \{\{A,C,F\}, \{A,C,G\}, \{A,E\}, \{B,D,F\}, \{B,D,G\}, \{B,E\} \}
\]
Let us consider the first maximal independent sets $C_1 = \{A,C,F\}$ and let us suppose that MCIP holds which implies that A, C, F are mutually independent given rest of the random vectors {B,D,E,G}.\\
Or equivalently independence relation can be expressed as
\[
	A \bigCI C \bigCI F \mid (B,D,E,G) \\
\]
Applying weak union graphoid axiom (Equation (3)) to the above independence relation we get
\begin{align*}
	A \bigCI F \mid (B,D,E,G) \cup C  \\
	C \bigCI F \mid (B,D,E,G) \cup A \\	
	A \bigCI C \mid (B,D,E,G) \cup F \\
\end{align*}
Applying similarly arguments for the other set of maximal independent set we can show that for every non-adjacent pair $x,y \in V$ 
\[
	x \bigCI y \mid V \setminus \{ x,y \}
\] which is also a definition of pair-wise Markov property.

Conversely given pair-wise Markov property we have to show that MCIP holds.
From theorem (Hammersley-Clifford theorem), it is clear that under positive distribution assumption P satisfies pairwise Markov property with respect to graph G if and only if it factorizes as follows.
\[
	P(x) = \psi_1(A,B)\; \psi_2(A,D) \; \psi_3(B,C) \; \psi_4(C,D,E)\; \psi(E,F,G)
\]
Let us consider probability of (A,C,F) conditioned on (B,D,E,G), we obtain conditional probability as
\begin{align*}
	P(A,C,F \mid B = b,D=d,E=e,G=g) &= \phi_{11}(A,b)\; \phi_{12}(A,d) \; \phi_{21}(C,b) \;\phi_{22}(C,d,e) \; \phi_3(F,e,g)\\
	 &= \phi_1(A)\; \phi_2(C) \; \phi_3(F)
\end{align*}
From above factorization of pdf it follows that (A,C,F) are mutually independent conditioned on (B,D,E,G).\\
\\
Similarly we can show mutual conditional independence relations for the remaining maximal independent sets.
Hence it follows as 
\[
MCIP \iff \text{ pair-wise Markov property }
\]
Applying proposition 12 we get following equivalence relation that completes the proof.
\[
MCIP \iff P \iff L \iff G 
\]

\section{Applications and Illustrations}
In the following, we illustrate applications of MCIP for inference in Markov networks for discrete and continuous data set.
\\
%One possible application could be, when a set of observed variables corresponds to the mutual conditional independence(MCI) in P, statistical inference will be easier.\\
%For applications of conditional mutual independence based on cutsets of the graph, in the implication problem of probabilistic conditional independency and relational database we refer to \citet{Raymond}.

\subsection{Inference in Graphical Log-linear Models}
First we consider the discrete data set, the Reinis data taken from the "GRbase" R package(Risk factors for coronary heart disease, for details on Reinis dataset see \cite{Reinis}). 

\begin{example} [Decomposable Graphical Model for Rienis Dataset:]
The Reinis data is shown in the table (\ref{table:1}). 
\newpage
\begin{longtable}[h!] {@{}ccccc|cccccc@{}} 
		\caption{Reinis data }\label{table:1}\\
		\toprule \centering
		&&&&& Smoke & \multicolumn{2}{c}{no} &  \multicolumn{2}{c}{yes}\\
				%\cmidrule{6-7} \cmidrule{9-10}
		
				%\cmidrule{6-7} \cmidrule{9-10}
		%\toprule \centering
		Family& Protein & Systol & Phys  & & Mental & no & yes &&no& yes \\
		%\cmidrule{7-8}	
		\toprule \centering
		neg& $<3$ & $<140$ & no & & & 44 & 40 && 112 & 67\\
		&  &  & yes & & & 129 & 145 && 12 & 23 \\
		&  & $\geq140$ & no & &&  35 & 12 && 80 & 33\\
		&  &  & yes & & & 109 & 67 && 7 & 9 \\
		
		& $\geq3$ & $<140$ & no & & & 23 & 32 && 70 & 66 \\
		&  &  & yes & & & 50 & 80 && 7 & 13 \\
		&  & $\geq140$ & no & & & 24 & 25 && 73 & 57  \\
		&  &  & yes & & & 51 & 63 && 7 & 16 \\
		
		pos& $<3$ & $<140$ & no & &&  5 & 7 && 21 & 9 \\
		&  &  & yes & & & 9 & 17 && 1 & 4 \\
		&  & $\geq140$ & no & & & 4 & 3 && 11 & 8\\
		&  &  & yes & & &  14 & 17 && 5 & 2 \\
		
		& $\geq3$ & $<140$ & no & & & 7 & 3 && 14 & 14 \\
		&  &  & yes & & & 9 & 16 && 2 & 3 \\
		&  & $\geq140$ & no & &  & 4 &  0 && 13 & 11\\
		&  &  & yes & & &   5 & 14  && 4 &  4 \\
		\toprule \centering		
\end{longtable}

Using stepwise model selection for Reinis data, the best decomposable model we get is as given in figure (\ref{figure:2}) with following $\chi^2$ and $G^2$ test statistics. We use Wermuth's backward elimination algorithm, for details see \cite{Wermuth}.
\begin{align*}
	X^2  &= 51.11705 \\
	G^2 &= 51.35869 \\
	df & = 46\\
	X^2 << \chi^2(.95,46) &= 62.8, \text{ Hence the data supports the model  selected}.  
\end{align*}
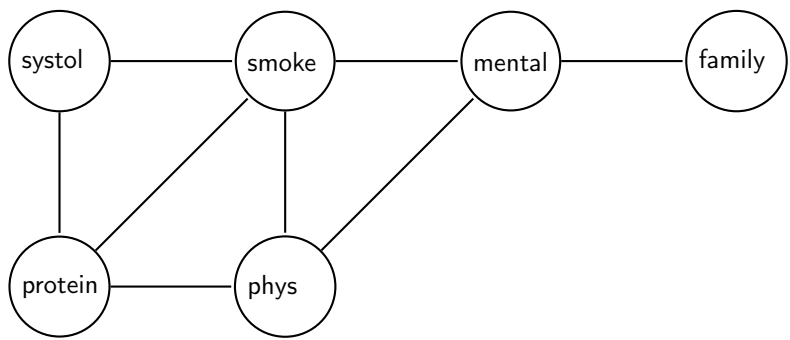
\begin{figure}[h!]  
\begin{center}
	\begin{tikzpicture}[>=stealth',shorten >=1pt,auto,node distance=3cm,
  thick,main node/.style={circle,text width=1cm, draw,font=\sffamily\small}]
	
  \node[main node] (1) {systol };
  \node[main node] (2) [below  of=1] {protein};
  \node[main node] (3) [ right of=1] {smoke };
  \node[main node] (4) [below  of=3] {phys   };
  \node[main node] (5) [right  of=3] {mental};
  \node[main node] (6) [right  of=5] {family};
 % \node[main node] (1) {family};
  
  \path[every node/.style={font=\sffamily\small}]
   (1) edge node [right] {} (2)      
   (2) edge node [right] {} (3)
   (1) edge node [left] {} (3)
   (3) edge node [right] {} (4)  
   (2) edge node [right] {} (4)
   (3) edge node [right] {} (5)
   (4) edge node [right] {} (5)
   (5) edge node [right] {} (6);
	\end{tikzpicture}
	\caption{A decomposable graphical model for the reinis data } \label{figure:2} 
\end{center}		
\end{figure}

We note that the variable set \{phys, systol,family\} forms an independent set as per the Markov network in figure (\ref{figure:2}). 

Now we derive a closed form expression for expected count using MCIP as follows. \\
\\
Let $X_1, X_2, X_3, X_4, X_5,X_6 $ represent the random variables Family, Protein, Systol, Phys, Smoke , Mental respectively.

\begin{align*}
	P(X_1, X_3,X_4 \mid X_2, X_5,X_6  ) &= \frac{P(X_1, X_2,X_3 ,X_4,X_5 ,X_6)}{P(X_2, X_5,X_6)}\\
	\text{ the joint pdf can be expressed as }&\\
	{P(X_1, X_2,X_3 ,X_4,X_5 ,X_6)} &= P(X_1, X_3,X_4 \mid X_2, X_5,X_6  ) * {P(X_2, X_5,X_6 )} \\	
	\\
	since \; X_1, X_3,X_4 & \text{ are mutually independent conditioned on } X_2, X_5,X_6 \\
	hence \; P(X_1, X_3,X_4 \mid X_2, X_5,X_6  ) & \; \text{ can be factorized as }\\	
	P(X_1, X_3,X_4 \mid X_2, X_5,X_6  ) &= P(X_1\mid X_2, X_5,X_6  )*P( X_3 \mid X_2, X_5,X_6  )*P(X_4 \mid X_2, X_5,X_6  )\\	
\end{align*}
{ the joint pdf can be written as }
\[
	{P(X_1, X_2,X_3 ,X_4,X_5 ,X_6)} = P(X_1\mid X_2, X_5,X_6  )*  P( X_3 \mid X_2, X_5,X_6  )*P(X_4 \mid X_2, X_5,X_6  ) * {P(X_2, X_5,X_6 )}
	\]
	\[
	=  \frac{(X_1, X_2, X_5,X_6  )*P(X_3, X_2, X_5,X_6  )* P(X_4, X_2, X_5,X_6  )} {P(X_1, X_2, X_5,X_6)^2}\]
	%expected count can be computed as $\hat{m}_{ijklmn} = \hat{p}_{ijklmn}* N$, where $N = \sum_{ijklmn} n_{ijklmn}$ \\
	%substituting value of $\hat{p}_{ijklmn} $ we get following closed form expression for the expected cell counts.\\
	After simplification we get following closed form expression for the maximum likelihood estimator of the expected cell counts. For details on computing closed form expressions for the expected cell counts for decomposable log-linear models, see \cite{BishopBook}.
	\begin{align*}
		\hat{m_{ijklmn}} &= \frac{n_{ij..mn}  n_{.jk.mn}  n_{.j.lmn}}{n_{.j..mn}^2}\\
		\text{ where } n_{ijklmn} &= \text{ observed count in cell }(i,j,k,l,m,n)\\
	\hat{m_{ijklmn}} &= \text{ Maximum Likelihood Estimate of the expected cell count }  m_{ijklmn}\\
	\end{align*}
	Under the mutual conditional independence assumption the table of fitted values is given in the table(\ref{table:2}).	
%\newpage	
	
	\begin{longtable}[h!] {@{}ccccc|cccccc@{}} 
		\caption{Fitted values for Reinis table }\label{table:2}\\
		\toprule \centering
		&&&&& Smoke & \multicolumn{2}{c}{no} &  \multicolumn{2}{c}{yes}\\
				%\cmidrule{6-7} \cmidrule{9-10}
		
				%\cmidrule{6-7} \cmidrule{9-10}
		%\toprule \centering
		Family& Protein & Systol & Phys  & & Mental & no&yes &  &no&yes \\
		%\cmidrule{7-8}	
		\toprule \centering
		neg& $<3$ & $<140$ & no & & & 42.828483  & 4.323380  && 37.102750 &   3.745388\\
		&  &  & yes & & & 127.025386 & 12.822752 && 110.043382 &11.108480 \\
		&  & $\geq140$ & no & &&  336.061224 &  6.010204 && 17.081633   &2.846939\\
		&  &  & yes & & & 143.081633&  23.846939 && 67.775510  &11.295918 \\
		
		& $\geq3$ & $<140$ & no & & & 111.297302  &20.044064 && 78.517959 & 14.140675 \\
		&  &  & yes & & & 12.421574 &  2.237061  && 8.763165 &  1.578200 \\
		&  & $\geq140$ & no & & & 66.211530  &11.536857 && 33.427180 &  5.824433 \\
		&  &  & yes & & &  21.504599  & 3.747014 && 10.856691  & 1.891696 \\
		
		pos& $<3$ & $<140$ & no & &&   25.526279 &  4.311871 && 24.092218 & 4.069631 \\
		&  &  & yes & & & 50.612449 &  8.549400 && 47.769053 &  8.069097 \\
		&  & $\geq140$ & no& & & 28.956142 &  4.777763 &&  22.546004 &  3.720091\\
		&  &  & yes & & &  83.490210 & 13.775885 && 65.007644 & 10.726261 \\
		
		& $\geq3$ & $<140$ & no & & & 68.758172 & 14.452355 && 71.715512 & 15.073961\\
		&  &  & yes & & & 8.089197  & 1.700277  && 8.437119 &  1.773407 \\
		&  & $\geq140$ & no & &  & 63.788280 & 13.429112&&  58.472590 & 12.310019\\
		&  &  & yes & & &   15.516068  & 3.266541&&  14.223062 &  2.994329 \\
		\toprule \centering		
	\end{longtable}
	
	To test if the above model holds, we perform Perason's chi-square test. We use the follwing formula.
	\[
		 X^2 = \sum_i \frac{(O_i-E_i)^2}{E_i}
	\]
	where O denotes observed cell count and E as expected cell count. \\
	The following test statistcs we get 
\begin{align*}  
	X^2  &= 35.01022 \\
	df & = 46\\
	\chi^2(.95,46) &=  62.8   
\end{align*}
As per the Chi-Squared test, the data supports the mutual conditional independence among  \{phys, systol,family\} conditioned on  \{phys, systol,family\}.
%We performed the mutual independence check for \{phys, systol,family\} conditioned on \{smoke,mental ,protein\} , the result is as
%\[
%	X^2 = 35.01022 << \chi_{.05}^{2} = 62.8 \; with \; D.F. = 46 
%\]
For details on graphical log-linear model we refer the reader to \cite{Christensen}, and \cite{BishopBook}.
\end{example}

\begin{example} [Non-Decomposable Graphical Model for Rienis Dataset:]
Let us consider the Reinis data once again. We get the best non-decomposable graphical model as given in the figure (\ref{figure:3}) with following $\chi^2$ and $G^2$ test statistics.
\begin{align*}
	X^2  &= 61.87653 \\
	G^2 &= 62.84262 \\
	df & = 49\\
	X^2 < \chi^2(.95,49) &= 66.3, \text{ Hence the data supports the model  selected }.  
\end{align*}

We notice that in this model the following mutual conditional relation holds.
\[
	  ( phys \bigCI systol \bigCI family ) \mid  (phys, systol,family).
\]
Since the above relation is same as relation we got for decomposable model, hence the factorization of joint pdf, getting closed form expression for expected cell counts, chi-square test for model testing are the same as decomposable model as computed previously in example 1.
\begin{figure}[h!]  
\begin{center}
	\begin{tikzpicture}[>=stealth',shorten >=1pt,auto,node distance=3cm,
  thick,main node/.style={circle,text width=1cm, draw,font=\sffamily\small}]
	
  \node[main node] (1) {systol };
  \node[main node] (2) [right  of=1] {protein};
  \node[main node] (3) [ below of=2] {smoke };
  \node[main node] (4) [right  of=2] {mental   };
  \node[main node] (5) [below  of=4] {phys};
  \node[main node] (6) [right  of=5] {family};
 % \node[main node] (1) {family};
  
  \path[every node/.style={font=\sffamily\small}]
   (1) edge node [right] {} (2)      
   (2) edge node [right] {} (3)
   (1) edge node [left] {} (3) 
   (2) edge node [right] {} (4)
   (3) edge node [right] {} (5)
   (4) edge node [right] {} (5)
   (5) edge node [right] {} (6);
	\end{tikzpicture}
	\caption{A non-decomposable graphical model for the reinis data } \label{figure:3} 
\end{center}		
\end{figure}
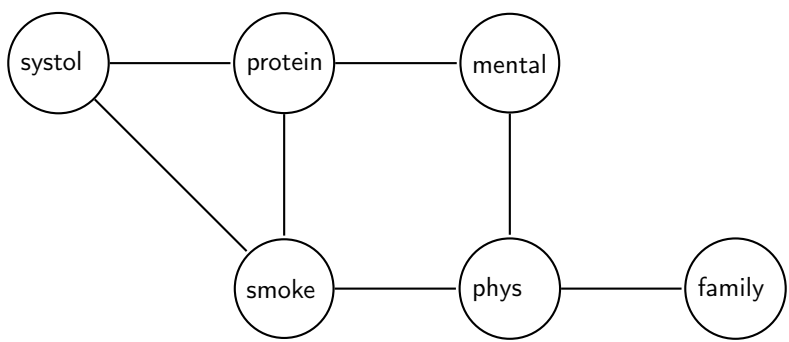
We also note that the closed form expression we get for estimated cell counts for non-decomposable graphical model must be same as one of the decomposable models. Hence MCIP can be directly used to get closed form estimates for non-decomposable graphical models without converting it to decomposable graphical models.
%since Independent set and clique are complementary notions
%We must note that to get closed form expression for computing expected cell counts for non-decomposable graphical model, MCIP acn be used.
\end{example}

\subsection{Inference in Gaussian Graphical Models}
In this section, we illustrate application of MCIP for inference in Gaussiam graphical models(GGM). We consider the "seeds"  dataset which is available at UCI machine learning repository  \\
https://archive.ics.uci.edu/ml/datasets/seeds. For details on GGM we refer to some selected classical and recent research papers \cite{Dempster}, \cite{mohan14a}, \cite{tan14b} and \cite{janzamin14a}
\begin{example}[A Decomposable Model for the Seeds dataset]
The best decomposable graphical model we get using stepwise selection method is given in figure(\ref{figure:4}).
\begin{figure}[h!]  
\begin{center}
	\begin{tikzpicture}
  \SetVertexNormal[Shape      = circle,
                   FillColor  = white,
                   LineWidth  = 1pt]
  \SetUpEdge[lw         = 1pt,
             color      = black,
             labelcolor = white,
             labeltext  = red,
             labelstyle = {sloped,draw,text=blue}]
 %\tikzstyle{EdgeStyle}=[bend left]
 \Vertex[L = $V_4$, x=0, y=0]{A} 
 \Vertex[L=$V_2$,x=0, y=3]{B}
 \Vertex[L=$V_1$, x=3, y=5]{C}
 \Vertex[L=$V_5$, x=4, y=2]{D}
 \Vertex[L=$V_3$,x=8, y=3]{E}
 \Vertex[L=$V_6$, x=7, y=0]{F}
 \Vertex[L=$V_7$, x=3, y=-1]{G}
 \Edges(A,B,D,A) \Edges(D,G,A) \Edges(B,G) \Edges(B,C) \Edges(B,E) \Edges(B,F) \Edges(C,E) \Edges(C,G) \Edges(C,D) \Edges(E,D) \Edges(E,G) \Edges(E,F) \Edges(F,G)
% \Edges(G,A,P,Q,E) \Edges(C,A,Q) \Edges(C,R,G) \Edges(P,E,A)
	\end{tikzpicture}
	\caption{A decomposable graphical model for the seeds dataset } \label{figure:4} 
\end{center}		
\end{figure}
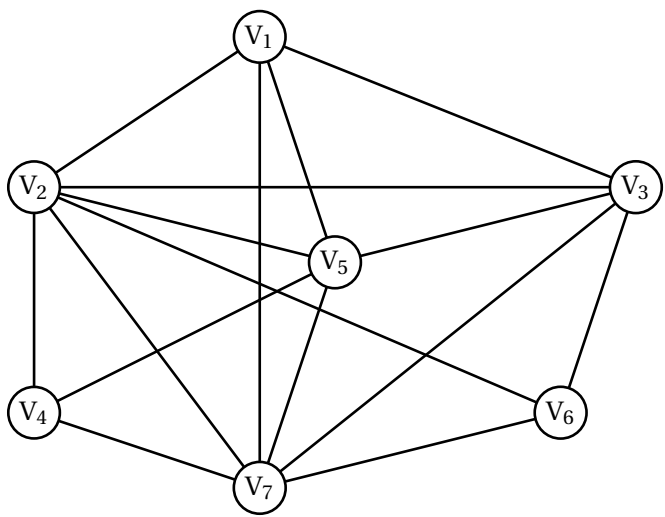
\end{example}

As per the graph in figure (\ref{figure:4}), the vertex set $\{ V_1,V_4,V_6 \}$ forms an independent set. The following conditional test results also supports that the variables $\{ V_1,V_4,V_6 \}$  are pair wise conditionally independent conditioned on the rest $\{ V_2,V_3,V_5, V_7 \}$.\\
\\
Test:$ V1 \bigCI V6 \mid V2,\; V3,\; V5,\; V7 $\\
Chi-Square test statistic :   $ 0.893 $ df:$ 1$ p-value: $0.3447 $\\
\\
Test $V1 \bigCI V4 \mid V2,\; V3,\; V5,\; V7 $\\
Chi-Square test statistic:   $ 1.055$ df: $1$ p-value: $0.3044$ \\
\\
Test $V4 \bigCI V6 \mid V2,\; V3,\; V5,\; V7 $\\
Chi-Square test statistic:    $0.952$ df: $1$ p-value: $0.3293 $\\
\\
For normal variables, zero correlation implies independence and pairwise independence implies mutual independence. Hence the variables $ V1 , V4 , V6 $ are also mutually conditionally independent conditioned on $\{ V_2,V_3,V_5, V_7 \}$. Equivalently it can be expressed as
\[ V1 \bigCI V4 \bigCI V6 \mid V2,\; V3,\; V5,\; V7 
\]

\section{Computational details}
All the experimental results in this paper were carried out using R 3.1.3 with the packages gRim and MASS. All packages used are 
available at http://CRAN.R-project.org/.

\section{Conclusion}
In summary, we discussed different Markov properties for the class of Markov networks. We derived an alternative formulation of the Markov properties of Markov networks. We have given a new perspective on conditional independence over an independent set as mutual conditional independence. We have proved equivalence between MCIP and the Markov properties, under positive distribution assumption. We have presented MCIP based approach for inference. The experimental results are carried out for the proposed MCIP based approach for inferences on discrete and continuous datasets. MCIP can be a promising new direction for model selection and inference in Markov networks.
%We also proved that for any undirected graph and any probability distribution which satisfies the mutual conditional independence with respect to the maximal independent set of the graph will also respect the pairwise, local and the global Markov properties of the graph.
%Applications of MCI with respect to maximal independent sets may further investigated for directed and chain graphs based graphical models.

% http://jmlr.csail.mit.edu/papers/volume1/heckerman00a/heckerman00a.pdf

% Manual newpage inserted to improve layout of sample file - not
% needed in general before appendices/bibliography.

\newpage

\vskip 0.2in
\bibliography{niharika15a}

\end{document}